\documentclass[11pt, a4paper]{article}
\usepackage[utf8]{inputenc} 
\usepackage{lmodern}
\usepackage[T1,T5]{fontenc} 

\usepackage[
	backend=bibtex, 
	style=authoryear-comp,
	sorting=nyt, 
	mincitenames=1,maxcitenames=2, 
	giveninits=true, 
	dashed=false,
	doi=false,isbn=false,url=false,eprint=false,
	natbib=true]
{biblatex} 
\DeclareNameAlias{sortname}{last-first}
\DeclareFieldFormat[article, inbook, incollection, inproceedings, misc, thesis, unpublished]{title}{#1}
\renewbibmacro{in:}{}

\makeatletter
\DeclareFieldFormat{citehyperref}{%
  \DeclareFieldAlias{bibhyperref}{noformat}
  \bibhyperref{#1}}
\savebibmacro{cite}
\savebibmacro{textcite}
\renewbibmacro*{cite}{%
  \printtext[citehyperref]{\restorebibmacro{cite}\usebibmacro{cite}}}
\renewbibmacro*{textcite}{%
  \printtext[citehyperref]{\restorebibmacro{textcite}\usebibmacro{textcite}}}
\makeatother

\usepackage{amsmath}
\usepackage[mathscr]{eucal}
\usepackage{amsfonts}
\usepackage{amsmath,amsxtra,latexsym,amsthm, amssymb, amscd}
\usepackage{graphicx}
\graphicspath{{figs/}{./}}
\usepackage{amsfonts}
\usepackage{amssymb}
\usepackage{tabularx}
\usepackage{rotating}
\usepackage{array}
\usepackage[usenames,dvipsnames]{xcolor}
\definecolor{darkblue}{RGB}{0,0,139}
\usepackage{caption,setspace}
\usepackage{authblk}
\usepackage[margin=1in]{geometry}
\usepackage[flushleft]{threeparttable}
\usepackage{url}
\usepackage{hyperref}
\hypersetup{
	colorlinks = true,
	urlcolor = {darkblue}
}
\usepackage{csquotes}
\usepackage[small]{titlesec}
\usepackage[en-IE]{datetime2}
\usepackage{graphicx}
\usepackage{subcaption}

\usepackage{appendix}
\DTMlangsetup[en-IE]{showdayofmonth=false}
\hypersetup{
	colorlinks=true, 
	citecolor=darkblue,					
	linkcolor=darkblue
}	

\allowdisplaybreaks
\providecommand{\keywords}[1]{\textbf{\textit{Keywords:}} #1}
\providecommand{\jel}[1]{\textbf{\textit{JEL code:}} #1}
\newtheorem{remark}{Remark}
\newtheorem{proposition}{Proposition}

\title{Globalization, economic growth, and innovation: A two-country two-period model\thanks{The authors thank Thanh-Truong Bui for his remarks and suggestions.}
		}
\author[1]{Cuong Le Van}
\author[2]{Duc V. Le}
\author[3]{Thanh Tam Nguyen-Huu\thanks{Corresponding author: \href{mailto:tnguyenhuu@em-normandie.fr}{tnguyenhuu@em-normandie.fr} (Nguyễn Hữu Thành Tâm).}
	\thanks{The other contacts: \href{mailto:Cuong.Le-Van@univ-paris1.fr}{Cuong.Le-Van@univ-paris1.fr} (Lê Văn Cường); 
								\href{mailto:dvl11@georgetown.edu}{dvl11@georgetown.edu} (Lê Vĩnh Đức).}}
\affil[1]{\small IPAG Business School, Paris School of Economics, CNRS, Foreign Trade University}
\affil[2]{\small Universit\'e Paris 1 Panth\'eon-Sorbonne, Georgetown University}
\affil[3]{\small EM Normandie Business School, M\'etis Lab}

\begin{document}

\maketitle

{\footnotesize\noindent \textcopyright\ 2026. This manuscript version is made available under the \href{https://creativecommons.org/licenses/by-nc-nd/4.0/}{CC-BY-NC-ND 4.0 license}. The peer-reviewed and accepted version is published in: \textit{International Economics} \textbf{187} (2026), article 100725. Please cite the published version of record at: \url{https://doi.org/10.1016/j.inteco.2026.100725}.\par}
\medskip

\begin{abstract}
This paper examines how a developing country can benefit from trade liberalization. We develop a two-period model, comprising an autarky phase and a globalization phase, and a two-country framework, featuring a developing country and a developed country (representing the rest of the world). Our findings indicate that globalization may disadvantage a developing country when its total factor productivity (TFP) is significantly lower than that of the developed country. However, we demonstrate that the developing country can still achieve gains from trade openness by allocating part of its capital to innovation during the autarky period, thereby enhancing its TFP.

\end{abstract}

\keywords{Globalization, TFP, Investment in Innovation, Gain.}

\jel{F1, O41, O3.}


\section{Introduction}
Over the past few decades, most countries have liberalized their economies, leading to increased cross-border flows of goods, services, technology, and capital. This phenomenon, commonly referred to as economic globalization or trade liberalization, is governed by international institutions such as the World Trade Organization (GATT/WTO). The conventional wisdom on economic growth and development suggests that open economies grow faster than closed ones \citep{GrossmanHelpman91, Dollar92, Edwards93}. However, neither existing theoretical models nor empirical evidence has reached consensus on the relationship between trade liberalization and economic gains.

Theoretically, traditional views predict that global trade stimulates growth, whereas recent conceptual work argues that this is not always the case. Trade openness is generally pursued with the expectation of promoting growth for both developed and developing countries. For example, studies based on endogenous growth models \citep{BarroSala97,BaldwinEtAl05,GrossmanHelpman90,GrossmanHelpman91,Lucas88,Acemoglu09} assert that global trade expansion can promote growth by facilitating the diffusion of knowledge and technology through imports of advanced products, foreign direct investment (FDI), and the accumulation of human capital through “learning-by-doing.” Nevertheless, these benefits depend on the degree of openness. Furthermore, \citet{AlesinaEtAl00,BondEtAl05} argue that trade openness enlarges market size, enabling economies to take advantage of increasing returns to scale and specialization. Their consensus emphasizes that trade enhances productivity through both short- and long-run effects: alleviating resource misallocation in the short term and driving technological innovation in the long term. In addition, trade liberalization can trigger domestic reforms in response to international competition, reinforcing economic development \citep{SachsWarner95}. Thus, trade policies can influence technological change through investment within endogenous growth frameworks.

In contrast, other theoretical studies suggest that trade openness may hinder growth, depending on the country's development level. For instance, less-developed countries lacking human capital \citep{Abramovitz86} or financial development \citep{AghionEtAl05} may be unable to absorb advanced technologies from developed economies. Endogenous growth models further suggest that if an economy specializes in sectors with a “comparative disadvantage” in productivity growth \citep{Redding99}, or if innovation and learning-by-doing are exhausted \citep{Lucas88,Young91}, trade openness can reduce growth over time. In such cases, “selective protection” combined with technological advancement may be more effective than a full liberalization \citep{Redding99}.

Empirical evidence is equally inconclusive. Some studies find a positive link between trade liberalization and growth \citep{LevineRenelt92,Harrison96,DollarKraay04,ChangEtAl09,Kim11,SachsWarner95,WacziargKaren08}, while others report no association or even negative effects \citep{ClemensWilliamson01,MusilaYiheyis15,Ulasan15,RodriguezRodrik00,Singh10}. These discrepancies may stem from differences in proxy, methodologies, and model specifications. Moreover, trade impacts vary across countries. \citet{KimLin09}, for example, show that openness benefits high-income economies but harms low-income ones, likely due to differences in absorptive capacity for knowledge and technology. 

Against this backdrop, our paper develops a theoretical endogenous growth model to explain how a country can benefit from globalization under specific conditions of trade liberalization. As prior evidence suggests, industrialized economies tend to gain the most from globalization, while the impacts on developing countries remain ambiguous. Our study focuses on developing countries and explores how they can leverage trade openness to enhance growth.

We construct a two-country model comprising a developing country (H) and a developed country (F, representing the rest of the world). Our analysis focuses on productivity and the utility of total consumption in country H under globalization. The framework spans two periods: an autarky phase, in which both countries are closed to trade, and a globalization phase, in which they sign a trade agreement allowing free trade. We examine two scenarios for country H: (1) allocating all initial capital to production in the autarky period, and (2) reserving part of its capital for innovation to improve total factor productivity (TFP) in the globalization phase. The model assumes the Law of One Price (LOP), which states that identical goods trade at the same price in the absence of frictions.\footnote{LOP holds under free competition, price flexibility, and no trade frictions such as tariffs or transport costs.}

Our findings indicate that if country H invests all its capital in production during autarky, it may lose out under globalization when its TFP is significantly lower than that of country F. In contrast, if H allocates part of its initial capital to innovation, its TFP improves in the second period, enabling it to gain from globalization.

This research makes two main contributions to the trade-growth literature. First, it shows that the growth effects of trade liberalization are conditional on pre-liberalization innovation effort and initial productivity levels. Globalization may fail to enhance growth when a developing country enters liberalization with a large productivity gap relative to the rest of the world. Thus, unlike standard endogenous growth models in which openness is generally growth-enhancing, our framework highlights that the effects of trade liberalization are not uniform, but depend on whether a country has accumulated sufficient capabilities before opening up. Second, our model identifies a threshold level of innovation investment above which innovation-led improvements in TFP enable trade liberalization to generate higher growth. This threshold-based mechanism provides a theoretical explanation for the heterogeneous growth outcomes of globalization in developing and advanced economies. More broadly, the paper offers a framework for understanding why the effects of openness may differ systematically between countries at different stages of development.

The remainder of our work is organized as follows. In \hyperref[literature]{Section~\ref*{literature}}, we review the theoretical and empirical literature according to trade liberalization and economic growth. \hyperref[model]{Section~\ref*{model}} basically develops a two-country model. We also examine how to gain from trade liberalization investment in innovation towards improving the country's TFP. \hyperref[con]{Section~\ref*{con}} concludes our main findings and potential research in the future.

\section{Literature Review}\label{literature}

The standard doctrines of international trade typically predict that countries benefit from openness through higher economic growth. These gains arise through specialization, investment in technological innovation, productivity improvements, and complementary policies such as better resource allocation and improved education. This line of reasoning originates with the classical theory of \citet{Ricardo17}, who argued that international trade enables the allocation of scarce resources more efficiently. Later, the landmark contribution of \citet{Solow57}, based on the neoclassical growth model, identified technological change as an exogenous driver of economic growth. Historically, Solow’s work has become one of the most influential findings in growth economics, demonstrating the crucial role of total factor productivity (TFP). His 1957 study showed that gross output per hour worked in the United States doubled between 1909 and 1949, with nearly seven-eighths of this increase attributable to technological progress and only one-eighth to capital deepening. These results were confirmed by \citet{Kendrick61}, who estimated that TFP accounted for 80.0 and 88.5 percent of the labor productivity's growth during the periods 1869–1953 and 1909–1948, respectively. More recent studies have reinforced these findings by emphasizing TFP as the dominant source of long-run growth (e.g., \citealp{Gordon99, Gordon16}).

However, the original Solow framework did not explicitly consider the role of policies such as education or trade in driving growth. Subsequent research addressed this limitation. For example, \citet{NelsonPhelps66} and \citet{Schultz75} showed that investments in education increase the supply of skilled labor and thus improve a country’s ability to adopt and exploit new technologies. Building on this idea, \citet{Lucas88} developed an endogenous growth model in which human capital accumulation—driven by learning-by-doing and on-the-job training—plays a central role in linking international trade with technological progress and productivity growth. Similarly, \citet{GrossmanHelpman91} modeled endogenous technological change and showed that greater trade openness facilitates the diffusion of frontier technologies, thereby promoting productivity and long-term growth. Although the theoretical literature generally supports the growth-enhancing effects of trade liberalization, substantial debate remains, particularly in empirical work.

In particular, several theoretical contributions suggest that the impact of trade liberalization on economic performance is not always uniformly positive. Although trade openness can improve productivity and growth through channels such as knowledge spillovers, technology diffusion, and innovation \citep{Lucas88, GrossmanHelpman91}, the ability to benefit from these mechanisms depends critically on a country’s learning and absorptive capacity. This capacity is influenced by financial development \citep{AghionEtAl05}, firms’ investment in R\&D \citep{CohenLevinthal90}, human capital \citep{Wozniak87, Cosar09}, foreign direct investment \citep{Findlay78}, and governance quality \citep{McmillanEtAl14}. Moreover, excessively rigid regulations can impede growth by preventing resources from being reallocated toward more productive firms and industries \citep{BolakyFreund08}. As a result, developing countries that lack R\&D, skilled labor, effective financial systems, or supportive institutions may fail to fully benefit from globalization.
Empirical evidence mirrors these theoretical ambiguities. Early studies, consistent with traditional theory, found positive growth effects of trade liberalization. For instance, \citet{SachsWarner95} reported that more open economies grew 2 to 2.5 percentage points faster than closed ones. Using updated data for 1950–1998, \citet{WacziargKaren08} showed that countries experienced growth rates about 1.5 percentage points higher after liberalization.

By contrast, later research has questioned whether trade liberalization always delivers growth gains. \citet{RodriguezRodrik00} raised serious doubts about the robustness of earlier results. Using the so-called “tariff–growth paradox,” \citet{ClemensWilliamson01} found a negative relationship between openness and growth, showing that protection was associated with faster growth before World War II but slower growth afterward. A survey by \citet{Singh10} similarly concluded that the growth effects of outward-oriented trade are far from uniform, while \citet{Tekin12} found no significant impact of trade liberalization in a sample of 27 least-developed countries in Africa.

The effects of trade openness also appear to depend on timing, degree of openness, and country characteristics. \citet{GreenwayEtAl02} documented a J-curve pattern in developing countries: short-run effects may be harmful as resources exit sectors with comparative disadvantage, but long-run growth improves as resources are reallocated toward more productive activities. Likewise, \citet{Pam18} found an inverted U-shaped relationship between trade openness and growth in 42 sub-Saharan African countries, implying that openness boosts growth up to a threshold, beyond which its marginal effect declines.

Furthermore, the impact of trade liberalization varies with a country’s level of development. Several studies find positive effects for advanced economies but weaker or even adverse effects for developing ones \citep{KimLin09, Herzer13}. This pattern is consistent with the regional asymmetries documented by \citet{ClemensWilliamson01}, who showed that tariffs were positively associated with growth in rich countries but negatively or weakly related in poorer economies. \citet{KimLin09} identified an income threshold above which trade globalization promotes growth and below which it may be harmful.

Finally, trade liberalization alone is not sufficient to guarantee strong economic performance. While greater openness can stimulate growth, it often operates through complementary channels such as capital accumulation and foreign investment. For example, \citet{LevineRenelt92} showed that trade liberalization encourages foreign direct investment by reducing tariffs, thereby supporting long-term growth. \citet{Harrison96} likewise found that the positive effect of openness works largely through capital accumulation. More recently, \citet{TrejosBarboza15} argued that trade openness was not the primary driver of Asia’s post-1950 growth miracle. Instead, productivity gains associated with capital deepening played a more central role. In general, technological innovation, productivity, and economic growth are mutually reinforcing, leading to sustained long-run development \citep{HallJones99, Rouvinen02}.
 
Accordingly, our study contributes to the trade-growth literature by developing a new theoretical framework that highlights the role of innovation investment. In particular, we demonstrate the existence of threshold effects under which trade liberalization can generate economic growth.


\section{The two-country model}\label{model}

We consider two countries: a developing country $H$ and a developed country $F$, which represents the rest of the world. The two countries live for two periods, $t=0,1$ (referred to as the first and second periods, respectively). The economic environment is as follows:
\begin{itemize}
\item[(i)] In period $0$, the two countries are in autarky; there is no trade between them.
\item[(ii)] In period $1$, $H$ and $F$ sign a trade agreement that allows for free trade and capital mobility between the two countries.
\end{itemize}

Each country is endowed with an initial stock of capital: country $H$ holds $k_{H,0}$, and country $F$ holds $k_{F,0}$. Preferences are identical across countries and are represented by the intertemporal utility function

\begin{eqnarray}\label{e_U}
    U(c_{i,0}, c_{i,1}) = \ln c_{i,0} + \ln c_{i,1}, \qquad i \in \{H,F\},
\end{eqnarray}
where $c_{i,t}$ denotes consumption in country $i$ in period $t$.

In each country, there is a representative consumer who maximizes the above intertemporal utility function. Assume also that there is a representative firm in each country that uses capital to produce a consumption good. Production technologies are given by
\[
Y_H = A_H k_H^{\alpha}, \qquad 
Y_F = A_F k_F^{\alpha},
\]
where $\alpha \in (0,1)$, $A_i$ denotes TFP in country $i$, and $k_i$ is the amount of capital employed in production.\footnote{In such a context, $c_{i,t}, k_{i,t}$ measure the consumption and capital per capita of country $i$ in period $t$.} 

\subsection{Autarky}
Consider the first period in which $ H$ and $ F$ are in autarky. The representative  consumer in country $i$ ($i=H,F$) solves:
\begin{eqnarray*}
&&\max \{\ln c_{i,0} +\ln c_{i,1}\}\\
&&\hbox{s.t. } c_{i,0} + k_{i,1} = A_i k_{i,0}^\alpha\\
&&c_{i,1}= A_i {k_{i,1}}^\alpha
\end{eqnarray*}
We obtain:
$$k_{is,1} = \frac{\alpha}{1+\alpha}A_i k_{i,0}^\alpha $$
where $k_{is,1}$ is the supply of capital of the consumer.
\\[5mm]
The firm of country $i$ maximizes its profit in order to obtain its demand for capital $k_{id,1} $:
\begin{eqnarray*}
\max \{p_i A_i k_{i,1}^\alpha - \rho_i  k_{i,1}\}
\end{eqnarray*}
here, $p_i, \rho_i$ are respectively the price of consumption good and the price of capital in country $i$. Let $r_i = \frac{\rho_i  }{p_i}$ denote the real interest rate in country $i$. The demand for capital of the firm in country $i$ is:
\begin{eqnarray*}
k_{id,1}= \left(\frac{A_i \alpha}{r_i} \right)^{\frac{1}{1-\alpha}}
\end{eqnarray*}
The real interest rate clears the capital market, i.e.
\begin{eqnarray*}
k_{is,1}&=& k_{id,1} 
\Leftrightarrow \frac{\alpha}{1+\alpha}A_i k_{i,0}^\alpha=  \left (\frac{A_i \alpha}{r_i} \right )^{\frac{1}{1-\alpha}} 
\end{eqnarray*}
We obtain:
\begin{eqnarray}\label{r^H}
 r_i= \frac{ {A_i}^\alpha \alpha ^\alpha (1+\alpha)^{1-\alpha}}{k_{i,0}^{\alpha(1-\alpha)}} 
 \end{eqnarray}
The maximum profit of country $i$ under autarky is:
\begin{eqnarray}\label{pi^H autarky}
\Pi _i = \left (\frac{A_i}{{r_i} ^\alpha} \right ) ^ {\frac{1}{1-\alpha}} (\alpha ^{\frac{\alpha}{1-\alpha}}- \alpha ^{\frac{1}{1-\alpha}}) 
\end{eqnarray}
\begin{remark}
We should, normally, write the intertemporal utility as: $$\ln c_{i,0} +\beta\ln c_{i,1}$$ where $\beta$ is the discount factor. Consequently, we obtain the supply of capital in country $i$ as: $$k_{is,1} = \frac{\alpha\beta}{1+\alpha\beta}A_i k_{i,0}^\alpha $$
from where we can compute the equilibrium real interest rate under the autarky in each country as:
\begin{eqnarray}\label{e_beta}
r_i= \frac{ {A_i}^\alpha \alpha ^\alpha (1+\alpha\beta)^{1-\alpha}}{\beta^{1-\alpha}k_{i,0}^{\alpha(1-\alpha)}}    
\end{eqnarray}
Notice that the real interest rates in Equations (\ref{r^H}) and (\ref{e_beta}) have similar characteristics (e.g, they increase with the productivity $A_i$ and decrease with the initial endowment $k_{i,0}$). For the sake of simplicity, we assume $\beta=1$ and use the utility function given in Equation (\ref{e_U}) in this research. Since we do not intend to investigate how the discount factor affects the economy under globalization, this assumption has no impact on the main paper's findings.
\end{remark}

\subsection{Globalization}
We now suppose that in period $1$, $H$ and $F$ sign a trade agreement such that capital and consumption goods are freely mobile between the two countries. The Law of One Price yields $p_H=p_F$ and $\rho_H= \rho_F$. Hence $r_H=r_F=r$.
\\[5mm]
Let $\Pi_{i}^G$ denote the maximum profit of the firm in country $i$ ($i=H,F$) under globalization:
$$\Pi_i ^G= \max _{k_{i,1}} \{A_i k_{i,1}^\alpha -r k_{i,1}\}$$
We obtain:
 \begin{eqnarray}
&&\hbox{Demand for capital:  }  k^G _{id,1} = \left (\frac {A_i \alpha }{r}\right ) ^{\frac{1}{1-\alpha}} \label{demand k^H}\\
&&\hbox{Maximal profit:  }\Pi_i^G= \left (\frac{A_i}{r ^\alpha} \right ) ^ {\frac{1}{1-\alpha}} (\alpha ^{\frac{\alpha}{1-\alpha}}- \alpha ^{\frac{1}{1-\alpha}}) \label{pi^H}
\end{eqnarray}
We observe that $\Pi_i ^G$ is positive since $\alpha \in (0,1)$.
The representative consumer of country $i$ solves:
\begin{eqnarray*}
&&\max\{ \ln c_{i,0} +\ln c_{i,1}\}\\
&&\hbox{s.t. } c_{i,0} + k_{i,1} = A_i k_{i,0}^\alpha\\
&&c_{i,1}= \Pi_i^G + r k_{i,1}
\end{eqnarray*}
Tedious calculations give the supply of capital from the consumer:
\begin{eqnarray}
k^G _{is,1}= \frac{1}{2}\left [A_i k_{i,0} ^\alpha + \left (\frac{A_i}{r}\right)^{\frac{1}{1-\alpha}} \left(\alpha ^{\frac{1}{1-\alpha}}- \alpha ^{\frac{\alpha}{1-\alpha}}\right ) \right ] \label{e_kG}
\end{eqnarray}
\begin{proof} See \hyperref[A1]{Appendix~\ref*{A1}}.\end{proof}

\subsubsection{Markets Clearing}
\textbf{(i) Capital Market}\\
When the capital market clears, we have a balance between demand and supply for capital in a global market, i.e.,:
$$k^G_{Hd,1}+ k^G _{Fd,1}= k^G _{Hs,1}+k^G _{Fs,1}$$
which is equivalent to:
\begin{eqnarray*}
&&\left (\frac {A_H \alpha }{r}\right ) ^{\frac{1}{1-\alpha}} + \left (\frac {A_F \alpha }{r}\right ) ^{\frac{1}{1-\alpha}} \\
&& =\frac{1}{2}\left [A_H k_{H,0}^\alpha + \left (\frac{A_H}{r}\right)^{\frac{1}{1-\alpha}} \left(\alpha ^{\frac{1}{1-\alpha}}- \alpha ^{\frac{\alpha}{1-\alpha}}\right ) \right ] + \frac{1}{2}\left [A_F k_{F,0} ^\alpha + \left (\frac{A_F}{r}\right)^{\frac{1}{1-\alpha}} \left(\alpha ^{\frac{1}{1-\alpha}}- \alpha ^{\frac{\alpha}{1-\alpha}}\right ) \right ] \end{eqnarray*}
\\
The common real interest rate for the two countries is:
\begin{eqnarray}\label{e_rvalue}
r^{\frac{1}{1-\alpha} }= %
\left [(1+\alpha) \alpha ^{\frac{\alpha}{1-\alpha}}\right ] \frac{ A_H^{\frac{1}{1-\alpha}} + A_F^{\frac{1}{1-\alpha}}}{A_H k_{H,0} ^\alpha + A_F k_{F,0} ^\alpha }
\end{eqnarray}  
\begin{proof}See \hyperref[A2]{Appendix~\ref*{A2}}.\end{proof} \par

{\raggedright \textbf{(ii) Consumption good market clearing}:}
Once the capital market clears, the consumption good market also clears, according to the Walras’ law. 
\subsubsection{Consequences of Globalization}
We assume that $r_F > r_H$, and this condition is satisfied when $A_F $ is sufficiently larger than $A_H$. Then, we could find: $$	r_H < r < r_F $$
where: $\lim\limits_{A_F\rightarrow +\infty} r_F = +\infty $.

\begin{proposition}\label{p_Af}
There exists a value ${\tilde A}_F$ such that, if $A_F \geq { \tilde A}_F$ then $\Pi^G_{H}<\Pi_{H}$.
\end{proposition}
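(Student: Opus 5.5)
Both profits share the same form. By \eqref{pi^H autarky} and \eqref{pi^H},
\[
\Pi_H^G = \Big(\tfrac{A_H}{r^\alpha}\Big)^{\frac{1}{1-\alpha}} C, \qquad
\Pi_H = \Big(\tfrac{A_H}{r_H^\alpha}\Big)^{\frac{1}{1-\alpha}} C,
\]
with $C=\alpha^{\frac{\alpha}{1-\alpha}}-\alpha^{\frac{1}{1-\alpha}}>0$. Hence
\[
\frac{\Pi_H^G}{\Pi_H}=\Big(\frac{r_H}{r}\Big)^{\frac{\alpha}{1-\alpha}},
\]
and $\Pi_H^G<\Pi_H$ holds exactly when $r>r_H$. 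The plan is therefore to show that the world interest rate $r$ rises above the autarky rate $r_H$ of country $H$ once $A_F$ is large enough.

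\textbf{Step 1.} Note that $r_H$, given by \eqref{r^H} with $i=H$, depends only on $A_H$, $k_{H,0}$ and $\alpha$. It is therefore a fixed positive constant as $A_F$ varies.

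\textbf{Step 2.} Study $r$ as a function of $A_F$ using \eqref{e_rvalue}:
\[
r^{\frac{1}{1-\alpha}}=(1+\alpha)\alpha^{\frac{\alpha}{1-\alpha}}\,\frac{A_H^{\frac{1}{1-\alpha}}+A_F^{\frac{1}{1-\alpha}}}{A_Hk_{H,0}^\alpha+A_Fk_{F,0}^\alpha}.
\]
For large $A_F$ the numerator grows like $A_F^{1/(1-\alpha)}$ and the denominator like $A_F$. Since $\frac{1}{1-\alpha}>1$, the ratio behaves like $A_F^{\alpha/(1-\alpha)}/k_{F,0}^\alpha$, which tends to $+\infty$. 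So $r\to+\infty$ as $A_F\to+\infty$.

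\textbf{Step 3 (main obstacle).} The statement says ``if $A_F\ge\tilde A_F$'', so we need the inequality on a whole tail $[\tilde A_F,\infty)$, not only along a sequence. One way is to take $\tilde A_F$ from the definition of the limit, choosing it so that $r>r_H$ for every $A_F\ge\tilde A_F$; this already gives a tail. Alternatively, one can make the threshold explicit. Write $a=A_F^{\frac{1}{1-\alpha}}$ and $K=(1+\alpha)\alpha^{\frac{\alpha}{1-\alpha}}$. Then
\[
r^{\frac{1}{1-\alpha}}>r_H^{\frac{1}{1-\alpha}}
\iff
K\big(A_H^{\frac{1}{1-\alpha}}+a\big)>r_H^{\frac{1}{1-\alpha}}\big(A_Hk_{H,0}^\alpha+a^{1-\alpha}k_{F,0}^\alpha\big).
\]
The left side is linear in $a$ with positive slope, while the right side grows like $a^{1-\alpha}$. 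So the difference (left minus right) eventually increases and stays positive. An explicit $\tilde A_F$ can be obtained by bounding $a^{1-\alpha}\le \varepsilon a+c_\varepsilon$ (Young's inequality) with $\varepsilon$ chosen so that $\varepsilon\, r_H^{\frac{1}{1-\alpha}}k_{F,0}^\alpha<K$, and then solving a linear inequality in $a$.

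\textbf{Step 4.} With $r>r_H$ for all $A_F\ge\tilde A_F$, the ratio identity above gives $\Pi_H^G/\Pi_H<1$, which proves the proposition. This is consistent with the paper's standing assumption $r_H<r<r_F$ when $A_F$ is large.
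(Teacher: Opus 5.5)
Your proof is correct and takes essentially the same route as the paper: both rest on $r\to+\infty$ as $A_F\to+\infty$ from Equation~(\ref{e_rvalue}), while $r_H$, and hence $\Pi_H$, does not depend on $A_F$. The paper concludes directly from $\Pi_H^G\to 0$, whereas your ratio identity $\Pi_H^G/\Pi_H=(r_H/r)^{\alpha/(1-\alpha)}$ sharpens this into the exact criterion $\Pi_H^G<\Pi_H \iff r>r_H$ and yields an explicit threshold — a useful refinement, but not a different argument.
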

\begin{proof} Equation (\ref{e_rvalue}) gives us: $\lim\limits_{A_F\rightarrow +\infty}r=+\infty$ and Equation (\ref{pi^H}): $\lim\limits_{r\rightarrow +\infty}\Pi^G_{H}=0$. Hence, when $A_F$ is high enough, we have $\Pi^G_{H}<\Pi_{H}$.\end{proof}

\hyperref[p_Af]{Proposition~\ref*{p_Af}} implies that when country $H$'s TFP $A_H$ is sufficiently low relative to country $F$'s TFP, country $H$ loses from globalization while country $F$ gains. In other words, trade liberalization reduces economic growth in developing countries and increases it in developed countries. In particular, as $A_F$ becomes arbitrarily large, profits in $H$ converge to zero. Consequently, very little production occurs in country $H$, while country $F$ attracts most of the capital from both economies.


\subsection{How a developing country can gain under Globalization}

Notice that under our assumptions, $H$ is a developing country, while $F$ represents the rest of the world (or, in a particular case, the main trading partners of country $H$). Hence, $k_{H,0}$ is very small relative to $k_{F,0}$. If the TFP of $H$, denoted by $A_H$, is not too low, it may be the case that $r_H > r_F$. Under globalization, the equilibrium interest rate then satisfies $r_F < r < r_H$. In this situation, country $H$ benefits from globalization: its output is higher, and it earns greater profits.

Unfortunately, in practice, $A_H$ is often very low, particularly in less developed economies, such as many African countries, so that the condition $r_F < r < r_H$ does not hold. This helps explain why these countries may lose from globalization, as discussed in \hyperref[literature]{Section~\ref*{literature}}. The key question, therefore, becomes how a developing country can achieve higher economic growth under trade liberalization when its TFP is low. In our framework, this involves determining how the TFP of country $H$, $A_H$, can become sufficiently high to satisfy the condition $r_F < r < r_H$. Our proposed solution is to invest in innovation to improve TFP. We now describe the mechanism in detail.

\subsubsection{Under the Autarky Period}

Now, we assume that country $H$ will only use a fraction $\zeta\in (0,1)$ of the initial capital $k_{H,0}$ for its production in the first period. The remaining $(1-\zeta)$ will be used for investment in innovation that improves the TFP in the second period. We assume that the country H's TFP can be now expressed as: 
$$\left [\lambda (1-\zeta) k_{H,0} +1 \right ]^\theta A_H, \lambda >0, \theta >0$$ 
where parameters $\lambda, \theta $ measure the efficiency of the technology used to improve the TFP.\footnote{Several theoretical frameworks express the impact of investment in innovation (or R\&D expenditures) on TFP by providing a TFP function, which is increasing in the level of investment in innovation. Nonetheless, there are different ways to present such a relationship. For example, in \citet{bruno09,levan2010}, the authors rely on a linear function, whereas in \citet{pham25}, the authors use a non-linear function.} The constraints of country $H$, in autarky, now become:
\begin{eqnarray*}
&& c_{H,0} + k_{H,1} = A_H (\zeta k_{H,0})^\alpha \\
&&	c_{H,1}= \left[\lambda (1-\zeta) k_{H,0} +1 \right ]^\theta A_H k_{H,1} ^\alpha 
\end{eqnarray*}
We obtain the new capital supply in period $1$, $\tilde k _{Hs,1}$, and the new capital demand in period $1$, $\tilde k_{Hd,1}$, that are:
\begin{eqnarray*}&&\tilde k _{Hs,1}= \frac{\alpha}{1+\alpha}A_H {(\zeta k_{H,0})}^\alpha \\
&& \tilde k _{Hd,1}= \left( \frac {[\lambda (1-\zeta)  k_{H,0} +1 ]^\theta\alpha A_H}{\tilde r_H}\right ) ^{\frac{1}{1-\alpha}} 
\end{eqnarray*}
The new interest rate under autarky is:
\begin{eqnarray}\label{e_r}
    \tilde r_H= \frac{[\lambda (1-\zeta)  k_{H,0} +1 ]^\theta (A_H)^\alpha \alpha ^\alpha (1+\alpha)^{1-\alpha}}{\left (\zeta k_{H,0}\right )^{\alpha (1-\alpha)}}
\end{eqnarray}
The new maximum profit of country H under autarky is:
\begin{eqnarray}\label{tilde pi^H autarky}
\tilde \Pi_H = \left( \frac{\left[ \lambda (1-\zeta) k_{H,0} +1 \right ]^\theta A_H}{{\tilde r_H} ^\alpha} \right) ^ {\frac{1}{1-\alpha}} (\alpha ^{\frac{\alpha}{1-\alpha}}- \alpha ^{\frac{1}{1-\alpha}}) 
\end{eqnarray}
For country $F$, there is no change, so the real interest rate and the firm profit are as in Equation (\ref{r^H},\ref{pi^H autarky}).

\subsubsection{Under the Globalization Period}
In period 1, under globalization, capital and consumption goods are freely mobile between the two countries. Again, the LOP yields $ \tilde r_H^G=r_F^G=\tilde r $ where $\tilde r$ is the new common interest rate. 

The firm in country $H$ maximizes its profit, such as:
$$ \tilde \Pi_H^G= \max _{\tilde k_{H,1}} \Bigg\{\left[\lambda (1-\zeta) k_{H,0} +1 \right ]^\theta A_H k_{H,1}^\alpha -\tilde r k_{H,1} \Bigg\}$$

We obtain:
 \begin{eqnarray}
&&\hbox{Demand for capital:  }  \tilde k _{Hd,1}= \left( \frac {[\lambda (1-\zeta)  k_{H,0} +1 ]^\theta A_H \alpha}{\tilde r }\right ) ^{\frac{1}{1-\alpha}} \label{tilde demand k^H}\\
&&\hbox{Maximal profit:  } \tilde \Pi_H ^G = \left( \frac{\left[ \lambda (1-\zeta) k_{H,0} +1 \right ]^\theta A_H}{{ \tilde{r }} ^\alpha} \right) ^ {\frac{1}{1-\alpha}} (\alpha ^{\frac{\alpha}{1-\alpha}}- \alpha ^{\frac{1}{1-\alpha}})  \label{tilde pi^H}
\end{eqnarray}
The representative consumer of country $H$ solves:
\begin{eqnarray*}
	&&\max\{ \ln c^G_{H,0} +\ln c^G_{H,1}\}\\
	&&\hbox{s.t. } c^G_{H,0} + \tilde k_{H,1} = A_H (\zeta k_{H,0})^\alpha\\
	&&c^G_{H,1}= \tilde \Pi_H^G + \tilde r \tilde k_{H,1}
\end{eqnarray*}
We obtain the supply of capital from the consumer:
\begin{eqnarray}
\tilde k^G_{Hs,1}= \frac{1}{2}\left [A_H (\zeta k_{H,0}) ^\alpha + \left (\frac{\left[ \lambda (1-\zeta) k_{H,0} +1 \right ]^\theta A_H}{{ \tilde r}}\right)^{\frac{1}{1-\alpha}} \left(\alpha ^{\frac{1}{1-\alpha}}- \alpha ^{\frac{\alpha}{1-\alpha}}\right ) \right ] \label{tilde supply k^H} 
\end{eqnarray}
\begin{proof} See \hyperref[A3]{Appendix~\ref*{A3}}.\end{proof}
{\raggedright \textbf{Capital Market Clearing}}\\
Under globalization, the capital market clears when we get a balance between demand for capital and supply for capital:
$$ \tilde k_{Hd,1}+  k_{Fd,1}= \tilde k_{Hs,1}+  k_{Fs,1}$$
which is equivalent to:
\begin{multline*}
\left (\frac {[\lambda (1-\zeta)  k_{H,0} +1 ]^\theta A_H \alpha }{\tilde r}\right ) ^{\frac{1}{1-\alpha}} + \left (\frac {A_F \alpha }{\tilde r}\right ) ^{\frac{1}{1-\alpha}} \\
=\frac{1}{2}\left [A_H (\zeta k_{H,0}) ^\alpha + \left (\frac{[\lambda (1-\zeta)  k_{H,0} +1 ]^\theta A_H}{\tilde r}\right)^{\frac{1}{1-\alpha}} \left(\alpha ^{\frac{1}{1-\alpha}}- \alpha ^{\frac{\alpha}{1-\alpha}}\right ) \right ] \\
+ \frac{1}{2}\left [A_F (k_{F,0}) ^\alpha + \left (\frac{A_F}{\tilde r}\right)^{\frac{1}{1-\alpha}} \left(\alpha ^{\frac{1}{1-\alpha}}- \alpha ^{\frac{\alpha}{1-\alpha}}\right ) \right ] \end{multline*}
The new common real interest rate for the two countries is:
\begin{eqnarray}\label{e_newr}
\tilde r^{\frac{1}{1-\alpha} }= \left [(1+\alpha) \alpha ^{\frac{\alpha}{1-\alpha}}\right ] \frac{\left ([\lambda (1-\zeta)  k_{H,0} +1 ]^\theta A_H\right)^{\frac{1}{1-\alpha}} + \left (A_F\right)^{\frac{1}{1-\alpha}}}{A_H (\zeta k_{H,0}) ^\alpha + A_F k_{F,0} ^\alpha } 
\end{eqnarray}
\begin{proof}See \hyperref[A4]{Appendix~\ref*{A4}}
\end{proof}

\begin{proposition}\label{p_zeta}
There exists $\tilde\zeta$ such that if $\zeta<\tilde\zeta$ then $\tilde r _H >\tilde r> r_F$. In this case, globalization will lead the capital to flow from $F$ to $H$. 
\end{proposition}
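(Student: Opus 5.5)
The plan is to let $\zeta\to 0$ and use a continuity argument. Write $B(\zeta)=[\lambda(1-\zeta)k_{H,0}+1]^\theta A_H$ for the post-innovation TFP of $H$. Note that $B(\zeta)\to B(0)=(\lambda k_{H,0}+1)^\theta A_H>0$ as $\zeta\to 0$.

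\textbf{Step 1: $\tilde r_H$ blows up.} In Equation (\ref{e_r}) the numerator of $\tilde r_H$ converges to the positive constant $B(0)^\alpha\alpha^\alpha(1+\alpha)^{1-\alpha}A_H^{\alpha}/B(0)^{\alpha}\cdot B(0)^{\theta\cdot 0}$. More simply, the numerator stays bounded away from $0$, while the denominator $(\zeta k_{H,0})^{\alpha(1-\alpha)}$ tends to $0$. Hence $\lim_{\zeta\to 0}\tilde r_H=+\infty$.

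\textbf{Step 2: $\tilde r$ has a finite limit.} In Equation (\ref{e_newr}) both numerator and denominator are continuous in $\zeta\in[0,1)$, and the denominator tends to $A_F k_{F,0}^\alpha>0$. Therefore
\[
\lim_{\zeta\to 0}\tilde r^{\frac{1}{1-\alpha}}=(1+\alpha)\alpha^{\frac{\alpha}{1-\alpha}}\,\frac{B(0)^{\frac{1}{1-\alpha}}+A_F^{\frac{1}{1-\alpha}}}{A_F k_{F,0}^\alpha},
\]
which is finite.

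\textbf{Step 3: compare the limit with $r_F$.} Raising Equation (\ref{r^H}) for $i=F$ to the power $\frac{1}{1-\alpha}$ gives
\[
r_F^{\frac{1}{1-\alpha}}=(1+\alpha)\alpha^{\frac{\alpha}{1-\alpha}}\frac{A_F^{\frac{\alpha}{1-\alpha}}}{k_{F,0}^{\alpha}}=(1+\alpha)\alpha^{\frac{\alpha}{1-\alpha}}\frac{A_F^{\frac{1}{1-\alpha}}}{A_F k_{F,0}^{\alpha}}.
\]
Hence
\[
\lim_{\zeta\to 0}\tilde r^{\frac{1}{1-\alpha}}=r_F^{\frac{1}{1-\alpha}}\Big(1+(B(0)/A_F)^{\frac{1}{1-\alpha}}\Big)>r_F^{\frac{1}{1-\alpha}}.
\]
This is the key computation. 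The main thing to check carefully is this algebraic identification of the limit with $r_F$ times a factor strictly larger than $1$. Once it is in hand, continuity of $\tilde r$ in $\zeta$ yields some $\zeta_1>0$ with $\tilde r>r_F$ for all $\zeta<\zeta_1$.

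\textbf{Step 4: choose $\tilde\zeta$.} By Step 2, $\tilde r$ is bounded on $(0,\zeta_1]$, say by $M$. By Step 1, there is $\zeta_2>0$ with $\tilde r_H>M$ for $\zeta<\zeta_2$. Setting $\tilde\zeta=\min(\zeta_1,\zeta_2)$ gives $\tilde r_H>\tilde r>r_F$ whenever $\zeta<\tilde\zeta$.

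\textbf{Capital flow.} For the statement that capital flows from $F$ to $H$, I would show that $H$ has excess demand at $\tilde r$, i.e.\ $\tilde k_{Hd,1}>\tilde k^G_{Hs,1}$. Since $\alpha^{\frac{1}{1-\alpha}}-\alpha^{\frac{\alpha}{1-\alpha}}<0$, Equation (\ref{tilde supply k^H}) gives $\tilde k^G_{Hs,1}\le \tfrac12 A_H(\zeta k_{H,0})^\alpha$, which tends to $0$. Meanwhile, by Equation (\ref{tilde demand k^H}), $\tilde k_{Hd,1}=(B(\zeta)\alpha/\tilde r)^{\frac{1}{1-\alpha}}$ converges to a strictly positive limit, because $\tilde r$ has a finite limit by Step 2. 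Shrinking $\tilde\zeta$ if necessary gives excess demand in $H$. Capital-market clearing then forces an equal excess supply in $F$, so capital flows from $F$ to $H$.
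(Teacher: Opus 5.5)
Your proof is correct, and it proves more than the paper's proof does. The paper's proof is your Step 1 alone: it gets $\lim_{\zeta\to 0}\tilde r_H=+\infty$ from Equation (\ref{e_r}) and concludes $\tilde r_H>r_F$ for small $\zeta$. It tacitly relies on the earlier, unproved assertion that the world rate lies between the two autarky rates, and it gives no argument for the capital-flow claim.

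Your Steps 2--4 establish the missing sandwich $\tilde r_H>\tilde r>r_F$ by a limit and continuity argument. Your excess-demand argument then justifies the flow from $F$ to $H$: $H$'s open-economy supply is at most $\tfrac{1}{2}A_H(\zeta k_{H,0})^\alpha\to 0$, while its demand tends to a positive limit.

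An exact identity, which the paper's one-liner implicitly uses, would shorten your Steps 2--4. Set $C=(1+\alpha)\alpha^{\frac{\alpha}{1-\alpha}}$. Then $\tilde r_H^{\frac{1}{1-\alpha}}=C\,B(\zeta)^{\frac{1}{1-\alpha}}/(A_H(\zeta k_{H,0})^\alpha)$ and $r_F^{\frac{1}{1-\alpha}}=C\,A_F^{\frac{1}{1-\alpha}}/(A_F k_{F,0}^\alpha)$. Moreover, $\tilde r^{\frac{1}{1-\alpha}}$ is exactly $C$ times the mediant of these two fractions. Hence $\tilde r_H>r_F$ forces $\tilde r_H>\tilde r>r_F$ for every $\zeta$, not only in the limit; your Step 3 computation is half of this identity.

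The capital-flow part also holds without further shrinking of $\tilde\zeta$. At $r=\tilde r_H$, $H$'s open-economy supply equals its autarky supply $\frac{\alpha}{1+\alpha}A_H(\zeta k_{H,0})^\alpha$. Supply is increasing in $r$ and demand is decreasing in $r$, so $H$ has excess demand whenever $\tilde r<\tilde r_H$.

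Your limit-based route does not require spotting the mediant structure. It only delivers the result for small $\zeta$, but that is all the proposition claims.

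One harmless slip: in Step 1, the expression you write for the limit of the numerator drops the factor $(\lambda k_{H,0}+1)^\theta$. The correct limit is $(\lambda k_{H,0}+1)^\theta A_H^\alpha\alpha^\alpha(1+\alpha)^{1-\alpha}$, but you only use its positivity.
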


\begin{proof} Equation (\ref{e_r}) give us $\lim\limits_{\zeta\rightarrow 0}\tilde{r}_H=+\infty$. Then when $\zeta$ is small enough, we have $\tilde{r}_H>r_F.$
\end{proof}
\hyperref[p_zeta]{Proposition~\ref*{p_zeta}} indicates that by allocating a sufficient level of investment in innovation, country $H$ will receive a net inflow of capital and thereby have a higher production. Country $H$ gains from globalization through increased production. In other words, country $H$ experiences a higher economic growth with trade liberalization.

In addition, other parameters contribute to increasing the real interest rate $\tilde r_H$, such as the efficiency of the technology used to improve TFP $\lambda$ and $\theta$ or the initial level of TFP $A_H$. In other words, the higher the level of these parameters, the smaller the threshold $\tilde\zeta$ to ensure inequality $\tilde r_H>\tilde r >r_F$. Consequently, countries with weak infrastructure levels or strong corruption, resulting in a high level of uncertainty or adjustment lags, thus low values of $\lambda, \theta$, may have difficulty improving their TFP, except that they dedicate a high savings rate to invest in innovation. By contrast, some externalities, such as FDI spillovers, technological transfer, or foreign aid, can help improve TFP (e.g., \citealp{schiff17, yu22, nguyenhuu24}). In our framework, these factors can be implicitly reflected in a higher value of $A_H$, thereby lowering the threshold $\tilde\zeta$, and consequently facilitating developing countries' benefit from globalization.
\\ 
{\raggedright \textbf{Consumption, Utility in Country H and Globalization:}}\\
The consumption of the representative consumer of country H in each period is given by:
\begin{eqnarray*}
	c^{G^\star}_{H,0} && = A_H (\zeta {k_{H,0}})^\alpha - \tilde  k_{Hs,1} \\
	&& = \frac{1}{2}\left [A_H (\zeta k_{H,0}) ^\alpha - \left (\frac{\left[ \lambda (1-\zeta) k_{H,0} +1 \right ]^\theta A_H}{{ \tilde r}}\right)^{\frac{1}{1-\alpha}} \left(\alpha ^{\frac{1}{1-\alpha}}- \alpha ^{\frac{\alpha}{1-\alpha}}\right ) \right ] \end{eqnarray*} 
\begin{eqnarray*}
	c^{G^\star}_{H,1} && = \tilde \Pi^G_H + \tilde r \tilde k_{Hs,1}\\
	&& = \frac{\tilde r}{2}\left [A_H (\zeta k_{H,0}) ^\alpha - \left (\frac{\left[ \lambda (1-\zeta) k_{H,0} +1 \right ]^\theta A_H}{{ \tilde r}}\right)^{\frac{1}{1-\alpha}} \left(\alpha ^{\frac{1}{1-\alpha}}- \alpha ^{\frac{\alpha}{1-\alpha}}\right ) \right ] \end{eqnarray*} 
and from that, we also see: $ c^{G^\star}_{H,1} = \tilde{r} c^{G^\star}_{H,0}$ \\[3mm]
The utility of country H with innovation investment under globalization is given by:
\begin{eqnarray*}
 \tilde U^G_H &&= \ln c^{G^\star}_{H,0} +\ln c^{G^\star}_{H,1} \\
 &&= \ln \Bigg\{ \frac{\tilde r}{2}\left [A_H (\zeta k_{H,0}) ^\alpha - \left (\frac{\left[ \lambda (1-\zeta) k_{H,0} +1 \right ]^\theta A_H}{{ \tilde r}}\right)^{\frac{1}{1-\alpha}} \left(\alpha ^{\frac{1}{1-\alpha}}- \alpha ^{\frac{\alpha}{1-\alpha}}\right ) \right ]^2 \Bigg\}
\end{eqnarray*}
and that without technology investment:
\begin{eqnarray}
   U^G_H  = \ln \Bigg\{ \frac{ r}{2}\left [A_H ( k_{H,0}) ^\alpha - \left (\frac{ A_H}{{  r}}\right)^{\frac{1}{1-\alpha}} \left(\alpha ^{\frac{1}{1-\alpha}}- \alpha ^{\frac{\alpha}{1-\alpha}}\right ) \right ]^2 \Bigg\}
\end{eqnarray}
where:
\begin{eqnarray*}
	&& \tilde r^{\frac{1}{1-\alpha} }= \left [(1+\alpha) \alpha ^{\frac{\alpha}{1-\alpha}}\right ] \frac{\left ([\lambda (1-\zeta)k_{H,0} +1 ]^\theta A_H\right)^{\frac{1}{1-\alpha}} + \left (A_F\right)^{\frac{1}{1-\alpha}}}{A_H (\zeta k_{H,0}) ^\alpha + A_F k_{F,0} ^\alpha }\\
		&&  r^{\frac{1}{1-\alpha} }= \left [(1+\alpha) \alpha ^{\frac{\alpha}{1-\alpha}}\right ] \frac{\left ( A_H\right)^{\frac{1}{1-\alpha}} + \left (A_F\right)^{\frac{1}{1-\alpha}}}{A_H k_{H,0}^\alpha + A_F k_{F,0}^\alpha }
\end{eqnarray*}
note that $ \tilde{r} > r $ and when $ \zeta \rightarrow 1 $ then $ \tilde U^H \rightarrow U^H $. 
\\ \\
Now, our question is whether country $H$ can gain under globalization in terms of utility. That is if the country saves a portion of its initial capital (hence less consumption $ c_{H,0} $) to invest in technology/innovation for improving the TFP in the second period (hence more consumption $ c_{H,1} $), its associated utility will exceed that in the case without investment, i.e. $ \tilde U^G_H > U^G_H $. However, the result could not be clear because these utilities are affected by different factors, such as capital elasticity, $\alpha$; efficiency of technology used to improve the TFP, $\theta$ and $\lambda$; share of investment in capital, $\zeta$; or TFP level of country $F$, $A_F$. 
\section{Simulations}

To illustrate our theoretical framework, particularly \hyperref[p_zeta]{Proposition~\ref*{p_zeta}}, we consider a simple numerical calibration of the model. The parameter values are set as follows:
\[
\alpha=\frac{1}{3}, \qquad \theta=0.3, \qquad \lambda=0.5, \qquad A_H=2, \qquad A_F=6, \qquad k_{H,0}=1, \qquad k_{F,0}=4.
\]

These values are chosen for illustrative purposes and are meant to capture the situation of a developing economy that enters globalization with lower productivity and a lower capital stock than the rest of the world. The capital share parameter $\alpha=\frac{1}{3}$ is standard in the growth literature. The parameter $\theta=0.3$ implies that innovation improves productivity, but with diminishing returns. The parameter $\lambda=0.5$ reflects a moderate efficiency level of investment in innovation. Finally, Assumptions $A_H<A_F$ and $k_{H,0}<k_{F,0}$ reflect an initial technological and capital gap between the developing country and the foreign country. Let us define $i=1-\zeta$, the share of initial endowment (i.e., $k_{H,0}$) allocated to innovation. 

We then consider two alternative cases. In the first, foreign productivity is higher, with:
\[
A_F=8,
\]
while all other parameters remain unchanged. In the second, the productivity effect of innovation is weaker, with:
\[
\theta=0.15,
\]
again keeping the other parameters constant. These two alternative analyses allow us to assess how the critical threshold of innovation investment responds to a larger foreign productivity advantage and to a lower efficiency of domestic innovation. As a consequence, we have a more in-depth understanding of our discussion below \hyperref[p_zeta]{Proposition~\ref*{p_zeta}}.

For each calibration, we compute three interest rates as a function of the share of initial capital allocated to innovation, $i$, varying between 0 and 1:

\begin{itemize}
    \item[(i)] the domestic interest rate in the developing country with innovation, $\tilde r_H$;
    \item[(ii)] the world interest rate under globalization with innovation, $\tilde r$;
    \item[(iii)] the foreign country's interest rate, $r_F$, which is independent of $i$.
\end{itemize}

The baseline simulation, illustrated in \hyperref[f_pro2_baseline]{Figure~\ref*{f_pro2_baseline}}, yields a threshold level given in \hyperref[p_zeta]{Proposition~\ref*{p_zeta}}
\[
i^* \approx 0.1508
\]
such that:
\[
r_F > \tilde r_H \qquad \text{for} \qquad i<i^*,
\]
whereas
\[
\tilde r_H > r_F \qquad \text{for} \qquad i>i^*.
\]

\begin{figure}[!htbp]
    \centering
    \includegraphics[width=0.75\textwidth]{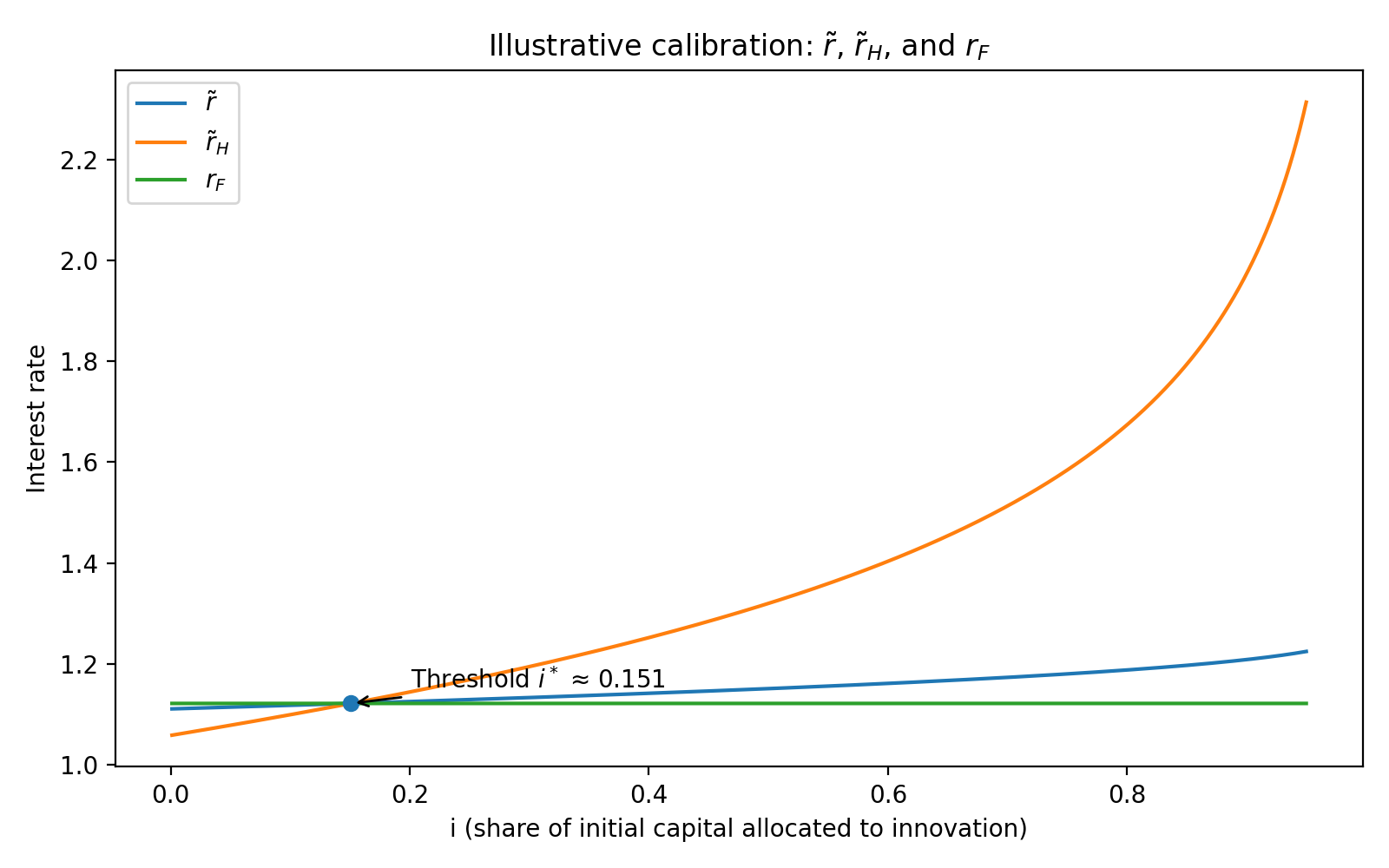}
    \caption{Illustrative calibration of \hyperref[p_zeta]{Proposition~\ref*{p_zeta}}: comparison between $r_F$ and $\tilde r_H$}
    \label{f_pro2_baseline}
\end{figure}

Accordingly, for low levels of innovation investment, the foreign interest rate remains above the domestic one. In this case, the developing country enters globalization from a position of technological weakness and is unlikely to benefit fully from international integration. By contrast, once the share of investment in innovation exceeds the threshold $i^*$, domestic productivity improves sufficiently for the domestic interest rate to overtake the foreign one. 

\hyperref[f_pro2_alter]{Figure~\ref*{f_pro2_alter}} reports the two alternative calibrations. Panel \subref{f_pro2_highAf} considers the case of higher foreign productivity and shows that the threshold increases to
\[
i^* \approx 0.3709,
\]
indicating that a larger foreign productivity advantage makes it significantly harder for the developing country to benefit from globalization. Panel \subref{f_pro2_lowtheta} presents the case with lower innovation efficiency and shows that the threshold becomes
\[
i^* \approx 0.1833,
\]
implying that weaker innovation efficiency also raises the level of investment required for globalization to generate positive returns.

\begin{figure}[!htbp]
    \centering
    \begin{subfigure}[b]{0.48\textwidth}
        \centering
        \includegraphics[width=\textwidth]{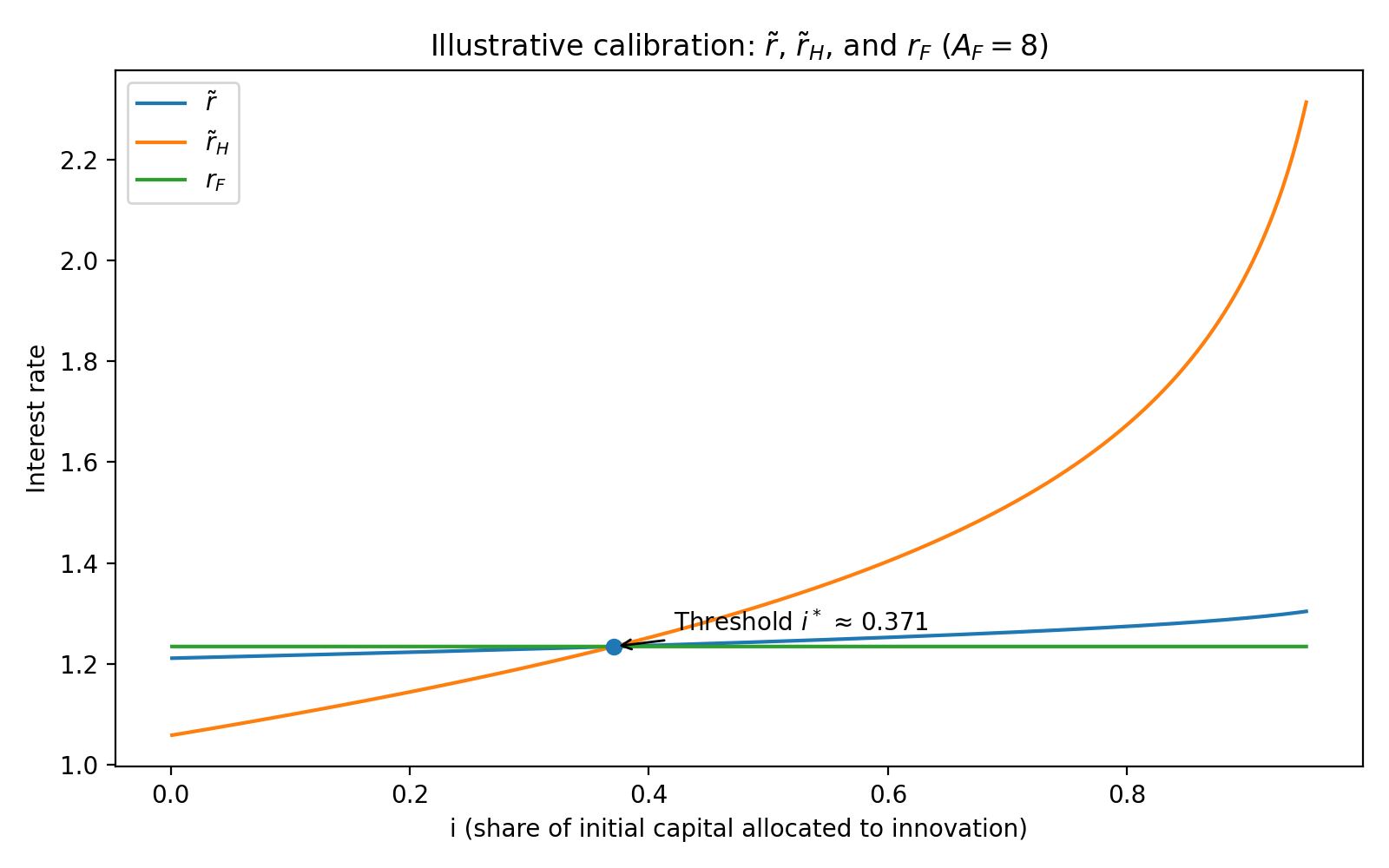}
        \caption{Higher foreign productivity ($A_F=8$).}
        \label{f_pro2_highAf}
    \end{subfigure}
    \hfill
    \begin{subfigure}[b]{0.48\textwidth}
        \centering
        \includegraphics[width=\textwidth]{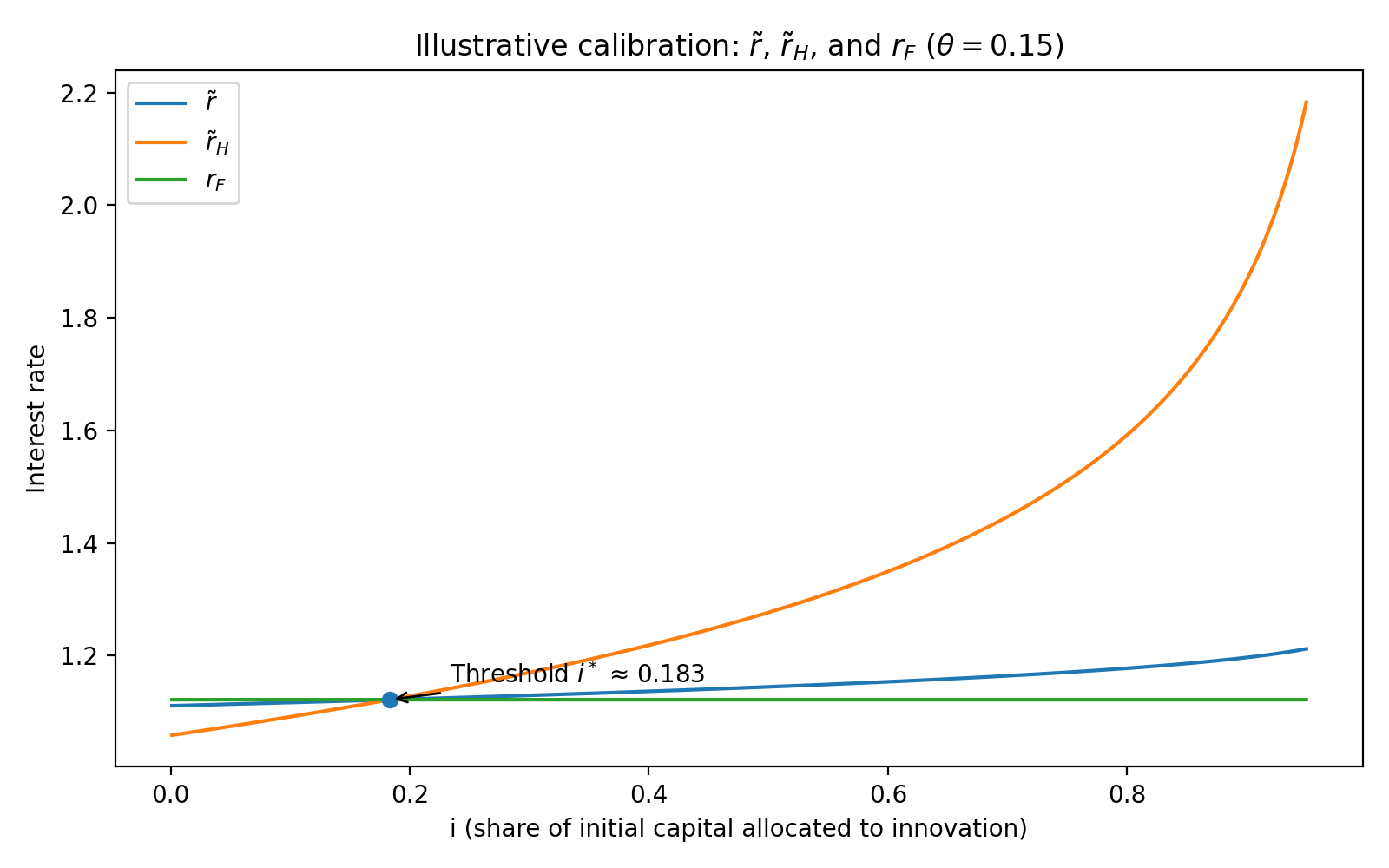}
        \caption{Lower innovation efficiency ($\theta=0.15$).}
        \label{f_pro2_lowtheta}
    \end{subfigure}
    \caption{Alternative calibrations: comparison between $\tilde r$, $\tilde r_H$, and $r_F$.}
    \label{f_pro2_alter}
\end{figure}

Together, these calibrations provide a clear numerical illustration of \hyperref[p_zeta]{Proposition~\ref*{p_zeta}}. In all cases, the gains from globalization are conditional on a sufficiently high level of innovation investment. Moreover, the threshold itself depends on the economy's structural characteristics. A larger initial productivity gap against the foreign economy, captured here by a higher value of $A_F$, substantially increases the threshold. Similarly, a lower innovation effectiveness, captured by a smaller value of $\theta$, also makes it more difficult for the developing country to take advantage of globalization. These results reinforce the model's central message: globalization is not automatically growth-enhancing for developing economies. Its benefits critically depend on domestic innovative capacity and on the country’s initial position relative to the rest of the world.\footnote{Notice that when $A_F$ becomes too high or $\theta$ remains too low, threshold $i^*$ can exceed 1. In this case, globalization is no longer beneficial to the developing country, except that it can borrow money from abroad.}

The policy implications are straightforward. First, trade liberalization should not be viewed as a substitute for domestic efforts to invest in innovation. Second, when the technological gap with the rest of the world is large, the effort required in innovation is substantially greater. Third, the effectiveness of innovation policy itself matters: weak innovation systems make it harder for developing countries to cross the threshold needed to benefit from openness. More generally, these numerical illustrations suggest that globalization is most beneficial when accompanied by sustained investment in innovation, technology adoption, and productivity improvement.

\section{Conclusion}\label{con}
How can a developing country benefit from trade liberalization? In this paper, we contribute to the trade-growth literature by examining a two-period model and demonstrating how a developing country can benefit from globalization. We show that a developing country loses from globalization when its TFP is very low relative to that of the rest of the world. We also show that a low-TFP country can still benefit from trade by investing in innovation. Nevertheless, to do so, the level of investment must exceed a critical threshold. From a policy perspective, our theoretical results suggest that trade liberalization should not be viewed in isolation. An investment in innovation to improve TFP is also required to achieve higher economic growth.

Although our results are meaningful, several caveats deserve attention for future research. First, the highly stylized nature of the model could be relaxed by weakening the law of one price (LOP), which we impose here. In reality, arbitrage is often limited by transport costs, tariffs, and market power exercised by large buyers or sellers. Second, in our framework, globalization entails only the mobility of capital and goods. In practice, globalization also involves the transfer of technology and technological spillovers from developed to developing countries. It is therefore important to examine how these channels affect the dynamic evolution of TFP in developing economies. To this end, we need to extend the model to an infinite-horizon growth framework.


\appendix
\section{Appendix}
\subsection{Proof of Equation (\ref{e_kG})}\label{A1}
The representative consumer of country $i$ solves:
\begin{eqnarray*}
	&&\max\{ \ln c_{i,0} +\ln c_{i,1}\}\\
	&&\hbox{s.t. } c_{i,0} + k_{i,1} = A_i k_{i,0}^\alpha\\
	&&c_{i,1}= \Pi_i^G + r k_{i,1}
\end{eqnarray*}
The equivalent problem:
\begin{eqnarray*}
	\max \Big\{\ln(A_i k_{i,0}^\alpha - k_{i,1}) + \ln(\Pi_i^G + r k_{i,1})\Big\}
\end{eqnarray*}
First Order Condition with respect to $ k_{i,1}$:
\begin{eqnarray*}
	&& - \frac{1}{A_i k_{i,0}^\alpha - k_{i,1}} + \frac{r}{\Pi_i ^G + r k_{i,1}} = 0\\ 
	&& \Leftrightarrow \frac{1}{A_i k_{i,0}^\alpha - k_{i,1}} = \frac{r}{\Pi_i ^G + r k_{i,1}}\\ 
	&& \Leftrightarrow \Pi_i ^G + r k_{i,1} = r A_i k_{i,0}^\alpha - r k_{i,1} \\
	&& \Leftrightarrow k_{i,1} = \frac{1}{2} \left( A_i k_{i,0}^\alpha - \frac{1}{r} \Pi_i ^G \right) \\
	&& \Leftrightarrow k_{i,1} = \frac{1}{2} \left( A_i k_{i,0}^\alpha - \frac{1}{r} \left( \frac{A_i}{r ^\alpha} \right ) ^ {\frac{1}{1-\alpha}} (\alpha ^{\frac{\alpha}{1-\alpha}}- \alpha ^{\frac{1}{1-\alpha}}) \right)
\end{eqnarray*}
We obtain:
\begin{eqnarray*}
k^G _{is,1}= \frac{1}{2}\left [A_i k_{i,0} ^\alpha + \left (\frac{A_i}{r}\right)^{\frac{1}{1-\alpha}} \left(\alpha ^{\frac{1}{1-\alpha}}- \alpha ^{\frac{\alpha}{1-\alpha}}\right ) \right ]
\end{eqnarray*}
\subsection{Proof of Equation (\ref{e_rvalue})}\label{A2}

\begin{align*}
& \left(\frac{A_H \alpha}{r}\right)^{\frac{1}{1-\alpha}} + \left(\frac{A_F \alpha}{r}\right)^{\frac{1}{1-\alpha}} \\
& \qquad = \frac{1}{2}\left[A_H k_{H,0}^\alpha + \left(\frac{A_H}{r}\right)^{\frac{1}{1-\alpha}}\left(\alpha^{\frac{1}{1-\alpha}}-\alpha^{\frac{\alpha}{1-\alpha}}\right)\right] + \frac{1}{2}\left[A_F k_{F,0}^\alpha + \left(\frac{A_F}{r}\right)^{\frac{1}{1-\alpha}}\left(\alpha^{\frac{1}{1-\alpha}}-\alpha^{\frac{\alpha}{1-\alpha}}\right)\right] \\
\intertext{by simple calculations:}
\Leftrightarrow{}\ & \left(\frac{A_H \alpha}{r}\right)^{\frac{1}{1-\alpha}} + \left(\frac{A_F \alpha}{r}\right)^{\frac{1}{1-\alpha}} - \frac{1}{2}\left(\frac{A_H}{r}\right)^{\frac{1}{1-\alpha}}\left(\alpha^{\frac{1}{1-\alpha}}-\alpha^{\frac{\alpha}{1-\alpha}}\right) - \frac{1}{2}\left(\frac{A_F}{r}\right)^{\frac{1}{1-\alpha}}\left(\alpha^{\frac{1}{1-\alpha}}-\alpha^{\frac{\alpha}{1-\alpha}}\right) \\
& \qquad = \frac{1}{2} A_H k_{H,0}^\alpha + \frac{1}{2} A_F k_{F,0}^\alpha \\
\Leftrightarrow{}\ & \frac{1}{2}\left(\frac{A_H}{r}\right)^{\frac{1}{1-\alpha}}\left(\alpha^{\frac{1}{1-\alpha}}+\alpha^{\frac{\alpha}{1-\alpha}}\right) + \frac{1}{2}\left(\frac{A_F}{r}\right)^{\frac{1}{1-\alpha}}\left(\alpha^{\frac{1}{1-\alpha}}+\alpha^{\frac{\alpha}{1-\alpha}}\right) = \frac{1}{2} A_H k_{H,0}^\alpha + \frac{1}{2} A_F k_{F,0}^\alpha \\
\Leftrightarrow{}\ & \left[\left(\frac{A_H}{r}\right)^{\frac{1}{1-\alpha}} + \left(\frac{A_F}{r}\right)^{\frac{1}{1-\alpha}}\right]\left(\alpha^{\frac{1}{1-\alpha}}+\alpha^{\frac{\alpha}{1-\alpha}}\right) = A_H k_{H,0}^\alpha + A_F k_{F,0}^\alpha \\
\Leftrightarrow{}\ & \frac{\left[\left(A_H\right)^{\frac{1}{1-\alpha}} + \left(A_F\right)^{\frac{1}{1-\alpha}}\right]\left[(1+\alpha)\alpha^{\frac{\alpha}{1-\alpha}}\right]}{r^{\frac{1}{1-\alpha}}} = A_H k_{H,0}^\alpha + A_F k_{F,0}^\alpha \\
\Leftrightarrow{}\ & r^{\frac{1}{1-\alpha}} = \left[(1+\alpha)\alpha^{\frac{\alpha}{1-\alpha}}\right]\frac{\left(A_H\right)^{\frac{1}{1-\alpha}} + \left(A_F\right)^{\frac{1}{1-\alpha}}}{A_H k_{H,0}^\alpha + A_F k_{F,0}^\alpha}
\end{align*}

\subsection{Proof of Equation (\ref{tilde supply k^H})}\label{A3}
Problem:
\begin{eqnarray*}
	&&\max\{ \ln c^G_{H,0} +\ln c^G_{H,1}\}\\
	&&\hbox{s.t. } c^G_{H,0} + \tilde k_{Hs,1} = A_H {(\zeta k_{H,0})}^\alpha\\
	&&c^G_{H,1}= \tilde \Pi^G_H + \tilde r \tilde k_{Hs,1}
\end{eqnarray*}
The equivalent one:
\begin{eqnarray*}
	\max \Big\{\ln(A_H {(\zeta k_{H,0})}^\alpha - \tilde k_{Hs,1}) + \ln(\tilde \Pi^G_H + \tilde r \tilde k_{Hs,1})\Big\}
\end{eqnarray*}
First Order Condition with respect to $ \tilde k_{Hs,1} $:
\begin{eqnarray*}
	&& - \frac{1}{A_H {(\zeta k_{H,0})}^\alpha - \tilde k_{Hs,1}} + \frac{\tilde r}{\tilde \Pi^G_H + \tilde r \tilde k_{Hs,1}} = 0\\ 
	&& \Leftrightarrow \frac{1}{A_H {(\zeta k_{H,0})}^\alpha - \tilde k_{Hs,1}} =  \frac{\tilde r}{\tilde \Pi^G_H + \tilde r \tilde k_{Hs,1}}\\ 
	&& \Leftrightarrow \tilde k_{Hs,1} = \frac{1}{2} \left( A_H {(\zeta k_{H,0})}^\alpha - \frac{1}{\tilde r} \tilde \Pi^G_H \right) \\
	&& \Leftrightarrow \tilde k_{Hs,1} = \frac{1}{2} \left[ A_H {(\zeta k_{H,0})}^\alpha -  \left( \frac{\left[ \lambda (1-\zeta) k_{H,0} +1 \right ]^\theta A_H}{{ \tilde r}} \right) ^ {\frac{1}{1-\alpha}} (\alpha ^{\frac{\alpha}{1-\alpha}}- \alpha ^{\frac{1}{1-\alpha}})  \right]
\end{eqnarray*}
\subsection{Proof of Equation (\ref{e_newr})}\label{A4}

\begin{align*}
& \tilde k_{Hd,1} + k_{Fd,1} = \tilde k_{Hs,1} + k_{Fs,1} \\
\Leftrightarrow{}\ & \left(\frac{[\lambda(1-\zeta)k_{H,0}+1]^\theta A_H \alpha}{\tilde r}\right)^{\frac{1}{1-\alpha}} + \left(\frac{A_F \alpha}{\tilde r}\right)^{\frac{1}{1-\alpha}} \\
& \qquad = \frac{1}{2}\left[A_H(\zeta k_{H,0})^\alpha + \left(\frac{[\lambda(1-\zeta)k_{H,0}+1]^\theta A_H}{\tilde r}\right)^{\frac{1}{1-\alpha}}\left(\alpha^{\frac{1}{1-\alpha}}-\alpha^{\frac{\alpha}{1-\alpha}}\right)\right] \\
& \qquad\quad + \frac{1}{2}\left[A_F k_{F,0}^\alpha + \left(\frac{A_F}{\tilde r}\right)^{\frac{1}{1-\alpha}}\left(\alpha^{\frac{1}{1-\alpha}}-\alpha^{\frac{\alpha}{1-\alpha}}\right)\right] \\
\intertext{by simple calculations:}
\Leftrightarrow{}\ & \left(\frac{[\lambda(1-\zeta)k_{H,0}+1]^\theta A_H \alpha}{\tilde r}\right)^{\frac{1}{1-\alpha}} + \left(\frac{A_F \alpha}{\tilde r}\right)^{\frac{1}{1-\alpha}} \\
& \qquad - \frac{1}{2}\left(\frac{[\lambda(1-\zeta)k_{H,0}+1]^\theta A_H}{\tilde r}\right)^{\frac{1}{1-\alpha}}\left(\alpha^{\frac{1}{1-\alpha}}-\alpha^{\frac{\alpha}{1-\alpha}}\right) - \frac{1}{2}\left(\frac{A_F}{\tilde r}\right)^{\frac{1}{1-\alpha}}\left(\alpha^{\frac{1}{1-\alpha}}-\alpha^{\frac{\alpha}{1-\alpha}}\right) \\
& \qquad = \frac{1}{2} A_H(\zeta k_{H,0})^\alpha + \frac{1}{2} A_F k_{F,0}^\alpha \\
\Leftrightarrow{}\ & \left[\left(\frac{[\lambda(1-\zeta)k_{H,0}+1]^\theta A_H}{\tilde r}\right)^{\frac{1}{1-\alpha}} + \left(\frac{A_F}{\tilde r}\right)^{\frac{1}{1-\alpha}}\right]\left(\alpha^{\frac{1}{1-\alpha}}+\alpha^{\frac{\alpha}{1-\alpha}}\right) = A_H(\zeta k_{H,0})^\alpha + A_F k_{F,0}^\alpha \\
\Leftrightarrow{}\ & \frac{\left[\left([\lambda(1-\zeta)k_{H,0}+1]^\theta A_H\right)^{\frac{1}{1-\alpha}} + \left(A_F\right)^{\frac{1}{1-\alpha}}\right]\left[(1+\alpha)\alpha^{\frac{\alpha}{1-\alpha}}\right]}{\tilde r^{\frac{1}{1-\alpha}}} = A_H(\zeta k_{H,0})^\alpha + A_F k_{F,0}^\alpha \\
\Leftrightarrow{}\ & \tilde r^{\frac{1}{1-\alpha}} = \left[(1+\alpha)\alpha^{\frac{\alpha}{1-\alpha}}\right]\frac{\left([\lambda(1-\zeta)k_{H,0}+1]^\theta A_H\right)^{\frac{1}{1-\alpha}} + \left(A_F\right)^{\frac{1}{1-\alpha}}}{A_H(\zeta k_{H,0})^\alpha + A_F k_{F,0}^\alpha}
\end{align*}

\printbibliography[heading=bibintoc]

\end{document}